\documentclass[11pt]{article}

\usepackage{amsthm,amsmath,bm,amssymb,stmaryrd,fullpage,times}
\usepackage{tikz,url,program}

\newcommand{\ket}[1]{\mid\hspace*{-4pt}{#1}\rangle}

\newtheorem{definition}{{\bf Definition}}[section]
\newtheorem{remark}{{\bf Remark}}[section]
\newtheorem{proposition}{{\bf Proposition}}[section]
\newtheorem{corollary}{{\bf Corollary}}[section]

\makeatletter
\def\mop#1{\mathop{\operator@font {#1\null}}}
\makeatother
\def\bigO{{\mop{O}}}
\def\HW{{\mop{hw}}}

\def\Tr{{\mop{Tr}}}

\begin{document}
\title{Efficient quantum circuits for binary elliptic curve arithmetic:\\ reducing $T$-gate complexity}

\author{Brittanney Amento\\
Florida Atlantic University\\
Department of Mathematical Sciences\\
Boca Raton, FL 33431\\
{\tt bferoz@fau.edu}\and
Martin R{\"o}tteler\\
NEC Laboratories America\\
4 Independence Way, Suite 200\\
Princeton, NJ 08540, U.S.A.\\
{\tt mroetteler@nec-labs.com}
\and Rainer Steinwandt\\
Florida Atlantic University\\
Department of Mathematical Sciences\\
Boca Raton, FL 33431\\
{\tt rsteinwa@fau.edu}
}

\maketitle

\begin{abstract}
    Elliptic curves over finite fields ${\mathbb F}_{2^n}$ play a prominent role in modern cryptography. Published quantum algorithms dealing with such curves build on a short Weierstrass form in combination with affine or projective coordinates. In this paper we show that changing the curve representation allows a substantial reduction in the number of $T$-gates needed to implement the curve arithmetic. As a tool, we present a quantum circuit for computing multiplicative inverses in $\mathbb F_{2^n}$ in depth $\bigO(n\log n)$ using a polynomial basis representation, which may be of independent interest.
\end{abstract}

\section{Introduction}
Binary elliptic curves form an especially important family of groups for cryptographic applications, and the implementation of their addition law in a quantum circuit has been studied by a number of authors \cite{KaZa04,MMCP09b}. To the best of our knowledge, in all these discussions the representation used for elliptic curves is a short Weierstrass form in combination with affine or projective coordinates. While this is a natural choice, restricting to such representations does not exploit the available technical machinery---there is a substantial body of work on how to optimize elliptic curve arithmetic on classical hardware architectures (cf. \cite{EFDB12}), and one may hope that some of these classical results allow for simplification at the circuit level when implementing binary elliptic curve arithmetic in a quantum circuit, e.\,g., when trying to find discrete logarithms \cite{Sho97}. For an actual implementation, the number of $T$-gates needed to implement such a circuit is particularly of interest and it is desirable to keep this number as small as possible. The reason for this is that for most fault-tolerant quantum computing schemes, the implementation of $T$-gates is achieved via so-called magic state distillation~\cite{BK:2005,ASG:2009,Reichardt:2009}, a process which is costly in terms of physical resources required. For instance, in the case of the surface code, it is reasonable to assume that a single $T$-gate has a cost that is about $100$ times higher than a single CNOT~\cite{ASG:2009}. While minimizing the total number of $T$-gates is the prime objective of circuit synthesis at the logical level, the total depth of the computation when arranged as an alternation between $T$-gates and Clifford gates (the so-called ``$T$-depth'') is also an important parameter. It is desirable to keep the $T$-depth low by parallelizing $T$-gates as much as possible. 

\paragraph{Our contribution.} Below, we show how changing the curve representation can help to reduce the number of $T$-gates needed to implement elliptic curve arithmetic---and in addition help to reduce the circuit depth. The quantum circuit we present makes use of point addition formulae suggested by Higuchi and Takagi \cite{HiTa00} and can in particular be used to reduce the number of gates as well as the depth, in comparison to the use of ordinary projective coordinates (cf. \cite{MMCP09b}).

Some applications of elliptic curves may require unique representations of curve points (cf. \cite{MMCP09b}). When dealing with representations for fast arithmetic, deriving a unique point representation may involve an inversion in the underlying finite field. In a polynomial basis representation, a quantum implementation of the extended Euclidean algorithm can be used for this inversion, however the circuit has $\bigO(n^3)$ gates and quadratic depth \cite{KaZa04,MMCP09,MMCP09b}. For other field representations, an inversion algorithm with depth $\bigO(n\log n)$ and $\bigO(n^2\log n)$ gates has been proposed \cite{ARS12}. In order to compute unique point representations using a polynomial basis more efficiently, we adapt the approach from \cite{ARS12} to the polynomial basis setting. In this way we obtain the first published quantum circuit using a polynomial basis representation which can compute inverses in ${\mathbb F}_{2^n}^*$ in depth $\bigO(n\log n)$ with $\bigO(n^2\log n)$~gates.

\section{Fixing a finite field representation}\label{sec:fieldrep}
Fast addition formulae for points on an elliptic curve over a finite binary field ${\mathbb F}_{2^n}$ aim at reducing the number of (expensive) ${\mathbb F}_{2^n}$-operations. The following operations are of particular interest:
\begin{description}
   \item[Addition:] Given $\alpha,\beta\in{\mathbb F}_{2^n}$, compute their sum $\alpha+\beta$.
   \item[Multiplication:] Given $\alpha,\beta\in{\mathbb F}_{2^n}$, compute their product $\alpha\cdot\beta$.
   \item[Multiplication with a constant:] For a fixed non-zero constant $\gamma\in{\mathbb F}_{2^n}^*$, on input $\alpha\in{\mathbb F}_{2^n}$, compute $\gamma\cdot\alpha$. The value $\gamma$, for example, could be a coefficient in the defining equation of an elliptic curve.
   \item[Squaring:] Given $\alpha\in{\mathbb F}_{2^n}$, compute $\alpha^2$.
\end{description}
If one is interested in a unique representation of curve points, then the inversion of ${\mathbb F}_{2^n}$-elements also comes into play.
\begin{description}
\item[Inversion:] Given $\alpha\in{\mathbb F}_{2^n}^*$, find $\alpha^{-1}\in{\mathbb F}_{2^n}$.
\end{description}
The specific cost of each operation depends on how the field ${\mathbb F}_{2^n}$ is represented, and in the next two sections we look at three representations that have been considered in the literature on quantum circuits.
\subsection{Polynomial basis representation}\label{sec:polybasis}
In a polynomial basis representation, ${\mathbb F}_{2^n}$ is identified with a quotient ${\mathbb F}_2[x]/(f)$ where $f\in{\mathbb F}_2[x]$ is an irreducible polynomial of degree $n$. Each $\alpha\in{\mathbb F}_{2^n}$ is represented by the unique sequence $(\alpha_0,\dots,\alpha_{n-1})\in{\mathbb F}_2^n$ with $\alpha=\sum_{i=0}^{n-1}x^i+(f)$. In a quantum circuit, we store each coefficient $\alpha_i$ in a separate qubit. Quantum arithmetic in such a representation has been explored by a number of authors, including Beauregard et al. \cite{BBF03}, Kaye and Zalka \cite{KaZa04}, and Maslov et al. \cite{MMCP09b}. For each of the four basic tasks mentioned above, the exact implementation complexity varies depending on the particular choice of $f$ and efficient circuits are available:
\begin{description}
   \item[Addition:] As addition is defined coefficient-wise, $n$ CNOT gates are sufficient to derive the representation of $\alpha+\beta$ from those of $\alpha$ and $\beta$. These gates operate on disjoint wires and can be implemented in depth $1$. To realize an addition $\ket{\alpha}\ket{\beta}\ket{0}\mapsto\ket{\alpha}\ket{\beta}\ket{\alpha+\beta}$ where the sum is stored in a separate register, we can first add $\ket{\alpha}$ to $\ket{0}$, followed by adding $\ket{\beta}$, i.\,e., $2n$ CNOT gates and depth $2$ suffice. In particular, we do not need a single $T$-gate to implement ${\mathbb F}_{2^n}$-addition.
   \item[Multiplication:] Building on a classical Mastrovito multiplier \cite{Mas88,Mas91,MaHa04}, in \cite{MMCP09b} a linear depth quantum circuit is presented which derives the product $\alpha\cdot\beta$ from $\alpha,\beta\in{\mathbb F}_{2^n}$. This circuit requires $n^2$ Toffoli gates and $n^2-1$ CNOT gates. In particular, the $T$-gate complexity of a full ${\mathbb F}_{2^n}$-multiplication is quite substantial.\footnote{With a realization of \cite{AMMR12}, a Toffoli gate can be implemented without ancillae with seven $T$-gates (or $T^\dagger$-gates which we assume to have the same cost) in a circuit that has a $T$-depth of $3$.}
  \item[Multiplication with a constant:] Fix $\gamma\in {\mathbb F}_{2^n}^*$.  As multiplication with $\gamma$ is ${\mathbb F}_2$-linear, invoking a general multiplier is not necessary. Instead, we can realize multiplication by $\gamma$ as a matrix-vector multiplication with a suitable non-singular matrix $\Gamma$. An  $LUP$-decomposition of $\Gamma$ immediately yields a depth~$2n$ circuit that is comprised of no more than $n^2+n$ CNOTs. No Toffoli gates are needed.

  \item[Squaring:] No dedicated quantum circuit to implement the squaring map $\ket{\alpha}\ket{0}\mapsto\ket{\alpha}\ket{\alpha^2}$ has been proposed, but as squaring in ${\mathbb F}_{2^n}$ is ${\mathbb F}_2$-linear, it is enough to implement a matrix-vector multiplication in depth $2n$ using no more than $n\cdot(n+1)=n^2+n$ CNOTs. No Toffoli gates are needed.
\end{description}
Summarizing, among the above mentioned four basic operations, only the general multiplication involves $T$-gates, and their number unfortunately grows quadratic in the extension degree $n$. In cryptographic applications of elliptic curves, values of $n\ge 160$ are common. Hence, if we can save a general ${\mathbb F}_{2^n}$-multiplication at the expense of some additions, squarings or constant multiplications,  this can be of great value for the implementor of a quantum circuit.

So far, our discussion has ignored the inversion operation. The current literature offers only a circuit with a cubic number of gates and quadratic depth \cite{KaZa04}, making the two representations discussed in the next section seemingly more attractive for inversion. However, in Section~\ref{sec:nicepolyinversion} below, we will show that both the cubic gate complexity and the quadratic depth of this operation can be avoided by adapating the inversion technique used in \cite{ARS12} to the polynomial basis setting.

\subsection{Gaussian normal basis and ghost-bit basis representations}
Aiming for a more efficient inversion algorithm, in \cite{ARS12} two field representations are considered that differ from the polynomial basis representation just discussed: a \emph{ghost-bit basis} and a \emph{Gaussian normal basis} representation. For the purposes of this paper it is not necessary to discuss their technical details, and we restrict to looking at the cost of the relevant arithmetic operations:
\begin{description}
\item[Addition:] With a Gaussian normal basis, addition can be performed in the same way as with a polynomial basis. If a ghost-bit basis is available, elements in ${\mathbb F}_{2^n}$ are represented with $n+1$ bits, resulting again in two approaches for the addition. One approach is to add $\ket{\alpha}$ to $\ket{\beta}$ yielding one additional CNOT gate and a depth 1 circuit. The other approach is to add $\ket{\alpha}$ followed by $\ket{\beta}$ to $\ket{0}$ yielding two additional CNOT gates and a depth 2 circuit. Apart from these details, the addition operation is exactly the same as when using a polynomial basis representation.
\item[Multiplication:] If a ghost-bit basis is available, the multiplication $\alpha\cdot\beta$ of two field elements $\alpha, \beta \in {\mathbb F}_{2^n}$ can be realized in depth $n+1$ using $(n+1)^2$ Toffoli gates.

With a Gaussian normal basis of type~$t$, a quantum circuit of depth $(t+(t\bmod 2))\cdot n-1$ involving $(t + (t\bmod 2))\cdot n^2-n$ Toffoli gates is available for multiplying two elements in ${\mathbb F}_{2^n}$.
\item[Multiplication with a constant:] Choosing the matrix $\Gamma$ in accordance with the Gaussian normal basis or the ghost-bit basis, we can proceed as in the case of a polynomial basis. For a Gaussian normal basis this yields a circuit with $n^2+n$ CNOTs, and as a result of the extra bit used in a ghost-bit basis, for the latter we obtain a quantum circuit comprised of $(n+1)\cdot(n+2)=n^2+3n+2$ CNOT gates. No Toffolis are needed.
\item[Squaring:] This operation is for free since the square of a field element can be obtained by simply reading the coefficient vector in permuted order. Hence, no gates are required to implement the squaring operation and we require $n$ respectively $n+1$ CNOTs, all operating in parallel, to implement the map $\ket{\alpha}\ket{0}\mapsto\ket{\alpha}\ket{\alpha^2}$, .
\end{description}
Again, in terms of $T$-gate complexity, multiplication is the dominating operation, and the number of squaring operations in formulae for fast elliptic curve addition can be expected to be quite small. Consequently, using a polynomial basis representation looks preferable, even if the particular extension degree of interest affords a Gaussian normal basis of small type.

However, taking the computation of inverses into account---an operation that occurs in the derivation of a unique representation of a curve point---the situation seems to become more involved: In \cite{ARS12} an inversion circuit of depth $\bigO(n\log n)$ involving $\bigO(n^2\log n)$~gates has been presented. Compared to the quadratic depth and cubic gate complexity of the best published inversion circuit using a polynomial basis \cite{KaZa04}, this looks quite attractive. While \cite{KaZa04} builds on Euclid's algorithm, \cite{ARS12} builds on a classical technique by Itoh and Tsujii \cite{ItTs89}, which exploits that an efficient squaring algorithm is available. As mentioned, in the case of a Gaussian normal basis or a ghost-bit basis representation, the squaring operations in a quantum circuit are actually for free. To overcome the cubic gate complexity and quadratic depth requirements of inversion using a polynomial basis, the next section shows how to apply Itoh and Tsujii's algorithm with a polynomial basis.

\subsection{Itoh-Tsujii inversion with a polynomial basis representation}\label{sec:nicepolyinversion}
Let $\alpha\in{\mathbb F}_{2^n}$ be non-zero. As $\alpha^{-1}=\alpha^{2^n-2}$, the inverse of $\alpha$ can be computed through exponentiation. Itoh and Tsujii proposed a particularly efficient method to compute this power (see \cite{ItTs89,TaTa01,RHSCC05,Gua11}), if the squaring operation in ${\mathbb F}_{2^n}$ is inexpensive. The quantum circuits for inversion in \cite{ARS12} use exactly this technique when working with a field representation where squaring is just a permutation of the coefficient vector. Here we want to show that even with a polynomial basis, this approach is a very attractive alternative to Euclid's algorithm. To describe Itoh and Tsujii's approach, it is convenient to introduce some notation: for $i\ge 0$ we define  $\beta_i=\alpha^{2^i-1}$. Then our goal is to find $\alpha^{-1}=(\beta_{n-1})^2$ from $\beta_1=\alpha$. For this we exploit that
\begin{equation}\beta_{i+j}=\beta_{i}\cdot\beta_j^{2^i}\label{equ:betamagic}
\end{equation}for all $i,j\ge 0$. Writing $n-1=\sum_{i=1}^{\HW(n-1)}2^{k_i}$ with $\lfloor\log_2(n-1)\rfloor=k_1>k_2>\dots>k_{\HW(n-1)}\ge 0$, Itoh and Tsujii's strategy to find $\alpha^{-1}$ can be summarized in three steps:
\begin{enumerate}
   \item[(I)] Repeatedly apply Equation~\eqref{equ:betamagic} with $i=j$ to find all of $\beta_{2^0}, \beta_{2^1},\dots,\beta_{2^{k_1}}$.
   \item[(II)] Use Equation~\eqref{equ:betamagic} to find $\beta_{2^{k_1}+2^{k_2}}, \beta_{2^{k_1}+2^{k_2}+2^{k_3}}, \dots, \beta_{2^{k_1}+2^{k_2}+\dots+2^{k_{\HW(n-1)}}}(=\beta_{n-1})$.
   \item[(III)] Compute $\alpha^{-1}=(\beta_{n-1})^2$.
\end{enumerate}
Computing a value $\beta_{i+j}$ from given values $\beta_i$, $\beta_j$ by means of Equation~\eqref{equ:betamagic} involes one multiplication and an exponentiation by a fixed power of $2$. As mentioned in Section~\ref{sec:polybasis}, the multiplication can be implemented with $n^2$ Toffolis plus $n^2-1$ CNOT gates in a quantum circuit of depth~$\bigO(n)$. Differing from the situation in \cite{ARS12}, the exponentiation with $2^i$ is not for free, but as the map $\xi\mapsto\xi^{2^i}$ is ${\mathbb F}_2$-linear and bijective, we can implement it as a matrix-vector multiplication with a suitable non-singular $n\times n$ matrix having entries in ${\mathbb F}_2$. Thence, using an LUP-decomposition of this matrix, the needed exponentiation can be realized with $n^2+n$ CNOT gates in depth $2n$. Summarizing, we see that in a polynomial basis representation, one evaluation of Equation~\eqref{equ:betamagic} can be realized in depth $\bigO(n)$ using $n^2$ Toffolis and $2n^2+n-1$ CNOT gates.

Step~(I) in the above procedure requires $\lfloor \log_2(n-1)\rfloor-1$ evaluations of Equation~\eqref{equ:betamagic}, i.\,e., this step can be realized in depth $\bigO(n\log_2 n)$ by means of $(\lfloor \log_2(n-1)\rfloor-1)\cdot n^2$~Toffolis and $\bigO(n^2\log n)$ CNOT gates. In Step~(II), performing $\HW(n-1)-1$ evaluations of Equation~\eqref{equ:betamagic} sequentially, we obtain a depth of $\bigO(n\log n)$, involving $(\HW(n-1)-1)\cdot n^2$ Toffolis and $\bigO(n^2\log n)$ CNOT gates. Step~(III) is just a matrix-vector multiplication with a suitable non-singular $n\times n$ matrix, and using an LUP-decomposition of the latter, a quantum circuit with no more than $n^2+n$ CNOT gates can realize this squaring in depth $2n$.

To `uncompute' ancilla, we run the complete circuit---with exception of the final squaring---`backwards' and obtain the following:

\begin{proposition}
   In a polynomial basis representation, $\alpha^{-1}$, the inverse of an element $\alpha\in{\mathbb F}_{2^n}$, can be computed in depth $\bigO(n\log_2(n))$ using $2\cdot (\lfloor \log_2(n-1)\rfloor+ \HW(n-1)-2)\cdot n^2=\bigO(n^2\log n)$ Toffolis and $\bigO(n^2\log n)$ CNOT gates. This includes the cost for cleaning up ancillae.
\end{proposition}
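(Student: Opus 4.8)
The plan is to assemble the circuit as a standard compute-copy-uncompute (Bennett-style) sandwich and then simply add up the gate counts already established for the individual building blocks. Write $U$ for the reversible circuit that carries out Steps~(I) and~(II), i.e.\ the one that, starting from $\ket{\beta_1}=\ket{\alpha}$ together with a supply of $\ket{0}$-initialised ancilla registers, repeatedly evaluates Equation~\eqref{equ:betamagic} until the register holding $\beta_{n-1}$ has been filled. From the preceding discussion each such evaluation costs $n^2$ Toffoli and $2n^2+n-1$ CNOT gates in depth $\bigO(n)$, Step~(I) contributes $\lfloor\log_2(n-1)\rfloor-1$ evaluations and Step~(II) a further $\HW(n-1)-1$, and since the evaluations form a single dependency chain they must be run sequentially. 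Hence $U$ has depth $\bigO(n\log n)$, uses $(\lfloor\log_2(n-1)\rfloor+\HW(n-1)-2)\cdot n^2$ Toffolis, and uses $\bigO(n^2\log n)$ CNOTs.

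First I would make precise how $\alpha^{-1}$ is extracted without disturbing what $U$ built up. Keeping a separate, untouched output register, I apply $U$, then the squaring map $S$ of Step~(III), realised as a purely linear (CNOT-only) circuit that reads $\beta_{n-1}$ and writes $\alpha^{-1}=(\beta_{n-1})^2$ into the output register; this costs $n^2+n$ CNOTs in depth $2n$ and no Toffolis. Crucially $S$ leaves $\beta_{n-1}$ itself intact, so I can then apply $U^{-1}$ to restore every ancilla register---both the ones holding intermediate values $\beta_{2^{k}}$ and any scratch used inside the multiplications and exponentiations---to $\ket{0}$. The net effect is $U^{-1}SU\colon\ket{\alpha}\ket{0}\ket{0}\mapsto\ket{\alpha}\ket{0}\ket{\alpha^{-1}}$, with only the output register altered.

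Counting then becomes routine. Reversing a circuit reverses the order of its gates but preserves their number, and both the Toffoli gate and the CNOT gate are their own inverses, so $U^{-1}$ has exactly the same Toffoli count, CNOT count, and depth as $U$. Hence the Toffoli cost of $U^{-1}SU$ is twice that of $U$, namely $2\cdot(\lfloor\log_2(n-1)\rfloor+\HW(n-1)-2)\cdot n^2$; this is where the factor of $2$ in the statement comes from, and it is precisely the price of cleaning up the ancillae. The CNOT cost is $2\cdot\bigO(n^2\log n)+(n^2+n)=\bigO(n^2\log n)$, and the depth is $2\cdot\bigO(n\log n)+2n=\bigO(n\log n)$, so all three bounds match the claim.

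The one step that needs genuine care---and which I would treat as the crux---is verifying that $U$ really is a clean reversible computation, so that $U^{-1}$ returns every ancilla to $\ket{0}$. For this the evaluations of Equation~\eqref{equ:betamagic} must be implemented out of place: each multiplication must preserve its two inputs and deposit the product $\beta_{i+j}$ in a fresh $\ket{0}$ register, and each exponentiation $\xi\mapsto\xi^{2^i}$ (which I need on a copy of $\beta_j$, since in Step~(I) the two factors coincide) must either act on its own ancilla register or be locally uncomputed after use. Granting this design---which only costs additional $\bigO(n\log n)$ ancilla qubits and does not change any of the gate counts above---the sandwich $U^{-1}SU$ is well defined and the proposition follows.
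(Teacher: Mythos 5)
Your proposal is correct and follows essentially the same route as the paper: sequentially evaluate Equation~\eqref{equ:betamagic} for Steps~(I) and~(II), apply the squaring of Step~(III) as an out-of-place linear map into a fresh register, and then run the Step~(I)/(II) circuit backwards to clean the ancillae, which is exactly where the paper's factor of $2$ in the Toffoli count comes from. Your added remark that the intermediate $\beta_{i+j}$ must be written out of place into fresh registers so that the reversal genuinely restores all ancillae is implicit in the paper's construction but worth making explicit.
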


\begin{remark}
Organizing the computation of $\beta_{n-1}$ in Step~(II) in a tree structure, the circuit depth for this step can be reduced to $\bigO(n\log\log n)$, but because of Step~(I), for the overall depth of the inverter we still obtain the bound $\bigO(n\log_2 n)$.
\end{remark}
Even though the squaring operation is not for free, in terms of $T$-gate complexity, this inverter seems quite competitive to the ones presented in \cite{ARS12} for ghost-bit and Gaussian normal basis representations. Thence, in the remainder of this paper we assume that a polynomial basis representation of the underlying field ${\mathbb F}_{2^n}$ is used.

\section{Binary elliptic curves}\label{sec:shortweierstrass}
Let $n\in{\mathbb N}$ be a positive integer and $\mathbb F_{2^n}$ a finite field of size $2^n$. For cryptographic applications, typical values are $n\in \{163,233,283\}$ \cite{FIPS1863}. 
% In the discussion below, we mainly use black-box access to finite field operations, and when refining our analysis to the gate level, we restrict to representations as discussed in \cite{ARS12}: we want to exploit that in such a representation squaring a field elements translates into a simple permutation of wires and that an inversion algorithm of depth $\bigO(n\log n)$ is available.
Perhaps the most common representation of ordinary elliptic curves in characteristic $2$ is a \emph{short Weierstra{ss} form}, given by a polynomial in ${\mathbb F}_{2^n}[x,y]$:
\begin{equation}
   y^2+xy=x^3+a_2x^2+a_6\label{equ:weierstrass}
\end{equation}
Here $a_2,a_6\in {\mathbb F}_{2^n}$, with $a_6\ne 0$, and for practical purposes one often has $a_2\in\{0,1\}$ (cf. \cite{FIPS1863}). We write $${\mathrm E}_{a_2,a_6}({\mathbb F}_{2^n}):=\{(u,v)\in{\mathbb F}_{2^n}: v^2+uv=u^3+a_2u^2+a_6\}\cup\{{\mathcal O}\}$$
for the (${\mathbb F_{2^n}}$-rational points on the) elliptic curve given by Equation~\eqref{equ:weierstrass}. The point ${\mathcal O}\in {\mathrm E}_{a_2,a_6}({\mathbb F}_{2^n})$ corresponds to the `point at infinity.'\footnote{More technically, $\mathcal O$ is the unique point that is obtained when passing to the projective closure of ${\mathrm E}_{a_2,a_6}$.} Because of $a_6\ne 0$, we have $(0,0)\not\in {\mathrm E}_{a_2,a_6}({\mathbb F}_{2^n})$, suggesting $(0,0)\in{\mathbb F}_{2^n}^2$ as convenient representation of $\mathcal O$. Hence, each curve point can be naturally represented as a pair of two field elements (which fit into $2n$ qubits). The elliptic curve ${\mathrm E}_{a_2,a_6}({\mathbb F}_{2^n})$ is equipped with a natural group structure, where $\mathcal O$ serves as the identity. Namely, for $P_1=(x_1,y_1)$ and $P_2=(x_2,y_2)$, their sum $P_3=P_1+P_2$ can be computed by the procedure in Figure~\ref{fig:additionlaw}, which is taken from \cite{Sol98}.
\begin{figure}[htb]
\begin{center}
\begin{minipage}{0.6\textwidth}
\begin{program}
  \IF P_1={\mathcal O} \THEN |return|\ P_2\untab\untab
  \IF P_2={\mathcal O} \THEN |return|\ P_1\untab\untab
  \IF x_1=x_2 \THEN \IF y_1+y_2=x_2 \hspace*{4em}\rcomment{\# $P_1=-P_2$}
 \THEN |return|\ {\mathcal O}
                    \ELSE \lambda\gets x_2+y_2/x_2 \rcomment{\# $P_1=P_2$}
                          x_3\gets\lambda^2+\lambda+a_2
                          y_3\gets x_2^2+(\lambda+1)x_3\untab\untab
              \ELSE \lambda\gets (y_1+y_2)/(x_1+x_2) \rcomment{\# $P_1\ne\pm P_2$}
                                 x_3\gets\lambda^2+\lambda+x_1+x_2+a_2
                                 y_3\gets(x_2+x_3)\lambda+x_3+y_2\untab\untab
                                |return|\ (x_3,y_3)           
\end{program}
\end{minipage}
\end{center}
\caption{adding two points on the elliptic curve $y^2+xy=x^3+a_2x^2+a_6$}\label{fig:additionlaw}
\end{figure}

%From Hasse's bound we know that the size of the group ${\mathrm E}_{a_2,a_6}({\mathbb F}_{2^n})$ differs from $2^n$ by no more than $2^{1+(n/2)}+1$, and the subgroups of ${\mathrm E}_{a_2,a_6}({\mathbb F}_{2^n})$ considered in cryptographic applications typically have a very small cofactor. Hence $n$ is a natural parameter to measure the complexity of a quantum circuit to solve the discrete logarithm problem in a ${\mathrm E}_{a_2,a_6}({\mathbb F}_{2^n})$.

\subsection{Choosing a curve representation: the cost of adding a fixed point}\label{sec:howtoadd}
Before looking at the task of implementing a general point addition $P_1+P_2$, it is worthwhile to consider the special case when $P_1\ne\mathcal O\ne P_2$, $P_1\ne \pm P_2$, and $P_2$ is a fixed point. In a discrete logarithm computation as discussed in \cite{KaZa04,MMCP09b}, this is the only case needed, i.\,e., only the very last case of the addition law in Figure~\ref{fig:additionlaw} needs to be taken into account. Still, when using affine coordinates, the addition law involves an inversion in ${\mathbb F}_{2^n}$ and as indicated by the discussion in Section~\ref{sec:fieldrep}, this inversion operation is typically (much) more expensive to implement than addition or multiplication in ${\mathbb F}_{2^n}$. Therefore, relying on a projective formulation of the group law is a natural choice when designing quantum circuits. In projective coordinates, each $(x,y)\in {\mathrm E}_{a_2,a_6}({\mathbb F}_{2^n})\setminus\{{\mathcal O}\}$ is represented by a triple $(X,Y,Z)\in{\mathbb F}_{2^n}^{3}$ such that $X/Z=x$ and $Y/Z=y$, and $\mathcal O$ is represented by a triple $(0,Y,0)\in{\mathbb F}_{2^n}^{3}$ with $Y\ne 0$. These triples are only unique up to multiplication with a non-zero element in ${\mathbb F}_{2^n}$. Maslov et al. \cite{MMCP09b} exploit this freedom to restrict the number of of finite field inversion circuits in a discrete logarithm computation. In particular, they observe that as long as such a (non-unique) projective representation is sufficient, the addition of a constant curve point can be realized in linear depth.

To the best of our knowledge, no detailed (gate-level) analysis of how to add a fixed point on an elliptic curve has been published. Subsequently we note that---even with a clever implementation of projective coordinates---the $T$-gate complexity of such a quantum circuit can be reduced  substantially by passing to a different curve representation. As a welcome aside, it seems that simultaneously the circuit depth can be brought down.

\subsubsection{Mixed addition with projective coordinates}\label{sec:goodoldprojective} For the fixed point that is to be added, one can assume an affine representation is available leaving no need to handle a general `$Z$-coordinate' for this operand. So using projective coordinates, a natural (non-trivial) way to implement the addition of a fixed point is to apply the \emph{madd-2008-bl} formulae from  \cite{EFDB12}:
with the curve parameter $a_2$ as in Equation~\eqref{equ:weierstrass} these formulae derive a projective representation $(X_3, Y_3, Z_3)$ of $P_1+P_2$ with twelve ${\mathbb F}_{2^n}$-multiplications, three of them having one operand fixed (namely, one operand is $x_2$, $y_2$ or $a_2$), seven ${\mathbb F}_{2^n}$-additions, and one squaring.
$$\addtolength{\arraycolsep}{-1pt}
\begin{array}{l}
\begin{array}{lllllllll}
  A &=& Y_1 + Z_1\cdot y_2,&  B &=& X_1 + Z_1\cdot x_2,& AB &=& A+B,\\
  C &=& B^2,& E&=&B\cdot C,& F &=&(A\cdot AB+a_2\cdot C)\cdot Z_1+E,\\\hline
\end{array}\\
\begin{array}{lll}
   X_3 &=& B\cdot F,\\
   Y_3 &=& C\cdot(A\cdot X_1+B\cdot Y_1)+AB\cdot F,\\
   Z_3 &=& E\cdot Z_1.
\end{array}
\end{array}$$
Translating these formulae one by one immediately yields a quantum circuit in which the number of Toffolis, respectively $T$-gates, is determined by the nine general ${\mathbb F}_{2^n}$-multiplications. To reduce the circuit depth, we can try to parallelize some of the computations. Adding some CNOT gates to create `work copies' of intermediate results, we can enable parallelization without increasing the number of $T$-gates. To characterize the complexity of the resulting quantum circuit, we write $D_M(n)$ for the depth of an ${\mathbb F}_{2^n}$-multiplier $$\ket{\alpha}\ket{\beta}\ket{\xi}\mapsto\ket{\alpha}\ket{\beta}\ket{\xi+\alpha\beta},$$ and $G_M(n)$ for the number of gates required to implement such a multiplier. Further, we write $D_M^T(n)$ for the $T$-depth of an ${\mathbb F}_{2^n}$-multiplier and $G_M^T(n)$ for the number of $T$-gates required to implement such a multiplier. We assume that $D_M(n)$, $G_M(n)$, $D_M^T(n)$, and $G_M^T(n)$ include the cost for cleaning up ancillae.
% Moreover, by $D_L(n)$ we denote the circuit depth of a needed to implement a matrix-vector multiplication $\ket{\alpha}\ket{\xi}\mapsto\ket{\alpha}\ket{\xi+A(\alpha_0,\dots,\alpha_n)^\mathrm{T}},$ with a fixed $n\times n$ matrix $A$ with ${\mathbb F}_2$-entries. In particular this captures the multplication with a constant $\beta\in{\mathbb F}_{2^n}$ and the squaring operation. However, also the composition of two such maps, $\alpha\mapsto\beta\cdot\alpha^2$ can be captured in this way. Analogously, $G_L(n)$ represents the number of $T$-gates needed to implement such a linear map.
Squaring operations and multiplications with a non-zero constant can be implemented with no more than $n^2+n$ CNOT gates in depth $2n$ each. As a functional composition of squarings and multiplications by a non-zero constant can be combined into a single invertible ${\mathbb F}_2$-linear map (through matrix multiplication), any fixed functional composition of squarings and non-zero constant multiplications can be implemented in depth $2n$ with $n^2+n$ CNOT gates as well.

\begin{proposition}\label{prop:naiveadd} The point addition $\ket{X_1}\ket{Y_1}\ket{Z_1}\ket{0}\ket{0}\ket{0}\longrightarrow\ket{X_1}\ket{Y_1}\ket{Z_1}\ket{X_3}\ket{Y_3}\ket{Z_3}$ can be implemented in overall depth $6D_M(n)$ plus $8n +\bigO(1)$ (the latter accounting for CNOT gates), and $T$-depth $6D_M^T(n)$. Further, a total of $15G_M(n)$ gates and $8n^2+\bigO(n)$ CNOT gates suffice. The total number of $T$-gates is $15G_M^T(n)$. This includes the cost for cleaning up ancillae.

Here $(X_3, Y_3, Z_3)$ is some projective representation of $P_1+P_2$ and $P_2\in {\mathrm E}_{a_2, a_6}({\mathbb F}_{2^n})$ a fixed point, represented with affine coordinates $(x_2, y_2)$. 
\end{proposition}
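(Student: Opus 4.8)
The plan is to translate the \emph{madd-2008-bl} formulae line by line into a quantum circuit built from the primitives catalogued in Section~\ref{sec:polybasis}, and then to establish the four claimed bounds by combining a critical-path (scheduling) analysis with an uncomputation analysis. Recall from that section that an ${\mathbb F}_{2^n}$-addition is realized by CNOTs in depth $\bigO(1)$, while a squaring and a multiplication by one of the fixed constants $x_2$, $y_2$, $a_2$ are ${\mathbb F}_2$-linear and can each be implemented by at most $n^2+n$ CNOTs in depth $2n$, using no $T$-gate. Consequently, among the twelve multiplications appearing in the formulae only the nine \emph{general} ones---those whose two operands are both data-dependent---consume $T$-gates, each contributing $G_M^T(n)$ $T$-gates and $T$-depth $D_M^T(n)$ through the clean multiplier $\ket{\alpha}\ket{\beta}\ket{\xi}\mapsto\ket{\alpha}\ket{\beta}\ket{\xi+\alpha\beta}$. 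This observation alone reduces every assertion about $T$-count and $T$-depth to counting multiplier invocations and the length of the longest chain of dependent general multiplications.

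First I would fix a schedule for the forward computation by reading off the dependency DAG of the nine general products. The quantities $A$, $B$, $AB$ and the squaring $C=B^2$ are prepared first by CNOT-only maps; the products $B\cdot C$, $A\cdot AB$, $A\cdot X_1$, $B\cdot Y_1$ then form one multiplication layer, the products using $Z_1$ (those feeding $F$ and $Z_3$) together with $C\cdot(A\cdot X_1+B\cdot Y_1)$ the next, and $B\cdot F$ together with $AB\cdot F$ the last, so that the forward part collapses to at most three multiplication layers. To keep these layers genuinely parallel I would insert CNOT \emph{work-copies} of the repeatedly used values $B$, $C$, $E$, $F$; by the remark in Section~\ref{sec:goodoldprojective} such copies cost only CNOTs and no $T$-gate, so they leave $G_M^T(n)$ and $D_M^T(n)$ untouched. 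Fusing each maximal run of squarings and constant multiplications into a single invertible ${\mathbb F}_2$-linear map (again depth $2n$, at most $n^2+n$ CNOTs) keeps the number of depth-$2n$ CNOT sublayers on the critical path bounded, which is what accounts for the additive $8n+\bigO(1)$ in the depth.

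Next I would handle the ancilla cleanup, which is where the multiplier factor and the $8n^2$ CNOTs originate. Once the outputs $X_3$, $Y_3$, $Z_3$ have been deposited in their fresh registers, every register still holding an intermediate ($A$, $B$, $AB$, $C$, $E$, $F$, and the partial products feeding $F$ and $Y_3$) must be returned to $\ket{0}$. The CNOT-derived intermediates are erased by running their linear maps backwards, while each surviving intermediate \emph{product} is erased by one further invocation of the multiplier---its own inverse in characteristic two---applied while its two factors are still live. Crucially, reusing the shared values $E$ and $F$, each needed by two later products, rather than recomputing them, is what bounds the extra multiplier invocations; adding them to the nine forward multiplications keeps the total to the asserted $15G_M(n)$ gates and $15G_M^T(n)$ $T$-gates. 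Summing the CNOTs of the additions, the fused linear maps, the work-copies, and all of their inverses---each fused map bounded by $n^2+n$ CNOTs---then yields the claimed $8n^2+\bigO(n)$, and since the reverse pass adds no more than three further multiplication layers, the overall multiplication depth is at most six, i.e.\ depth $6D_M(n)$ and $T$-depth $6D_M^T(n)$.

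The main obstacle is the simultaneous optimization rather than any single estimate: I must exhibit one schedule that respects all four bounds at once. Concretely, the liveness analysis---deciding which products to keep live (so that $E$ and $F$ are never recomputed), in what order to uncompute the rest (so that each reverse multiplier still finds both of its factors present), and where to place the CNOT work-copies (so that the forward and reverse passes each fit into three multiplication layers without inflating the CNOT budget past $8n^2+\bigO(n)$)---is the delicate step, since these objectives pull against one another. Once a valid such schedule is pinned down, verifying the gate, depth, $T$-gate, and $T$-depth counts is a routine tally over the primitive circuits of Section~\ref{sec:polybasis}.
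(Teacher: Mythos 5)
Your proposal follows essentially the same route as the paper's proof: the same three-layer schedule for the nine general multiplications (the $\{B\cdot C,\,A\cdot AB,\,A\cdot X_1,\,B\cdot Y_1\}$ layer, then the $Z_1$-products together with $C\cdot(A\cdot X_1+B\cdot Y_1)$, then $B\cdot F$ and $AB\cdot F$), the same use of CNOT work copies and fused linear maps for the constant multiplications and the squaring, and the same uncomputation strategy of running the circuit backwards while sparing the three output-producing multipliers, giving $9+6=15$ multiplier calls, depth $6D_M(n)+8n+\bigO(1)$, and $8n^2+\bigO(n)$ CNOTs. The counts all match the paper's, so this is correct and not a genuinely different argument.
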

\begin{proof}
To implement the \emph{madd-2008-bl} formulae we can proceed as follows:

\begin{enumerate}
    \item Create a `work copy' $Z_1'$ of $Z_1$ using $n$ CNOT gates, all of which operate in parallel. Then compute $Z_1\cdot y_2$ and $Z_1'\cdot x_2$ in parallel and store these values in separate ($\ket{0}$-initialized) registers, using $2\cdot (n^2+n)$ CNOT gates and depth $2n$.
    \item Using $2n$ CNOT gates, all operating in parallel, add $Y_1$ to $Z_1\cdot y_2$ and add $X_1$ to $Z_1'\cdot x_2$, so that those registers now hold $A$ and $B$ respectively. Using $2n$ additional CNOT gates and increasing the circuit depth by $2$, we can also store $AB=A+B$ in a new ($\ket{0}$-initialized) register. 
		Moreover, using $2n$ CNOT gates, we can in constant depth provide `work copies' $A'$ of $A$ and $B'$ of $B$.

   \item Using $n^2+n$ CNOT gates, we can now compute $C=B^2$ in depth $2n$. If $a_2\ne 0$, with no more than $n^2+n$ additional CNOT gates we can in parallel determine $a_2\cdot(B')^2$.

   \item Using four multiplication circuits that operate in parallel, we can now compute $E=B\cdot C$, $A\cdot AB$, $A'\cdot X_1$ and $B'\cdot Y_1$ in depth $D_M(n)$, using $4\cdot G_M(n)$ gates.

\item Next, using $\le2n$ CNOT gates that operate in parallel we can add $A'\cdot X_1$ to $B'\cdot Y_1$ and---if $a_2\ne 0$---$A\cdot AB$ to $a_2\cdot (B')^2$.
\item With three general ${\mathbb F}_{2^n}$-multipliers we can now compute $(A\cdot AB+a_2\cdot (B')^2)\cdot Z_1'$, $C\cdot (A'\cdot X_1+B'\cdot Y_1)$, $Z_3=E\cdot Z_1$ and store these values in new registers. For this, depth $D_M(n)$ and $3\cdot G_M(n)$ gates suffice.
\item By adding $(A\cdot AB+a_2\cdot C)\cdot Z_1'$ to $E$ we obtain the value $F$ in depth $1$---involving $n$ CNOT gates. Increasing the depth by $1$ and adding $n$ more CNOT gates, we can also create a `work copy' $F'$ of $F$.
\item Invoking two more multiplication circuits, we can obtain $X_3=B\cdot F$ and $AB\cdot F'$ in depth $D_M(n)$ with $2\cdot G_M(n)$ gates.
\item Finally, adding $AB\cdot F'$ to $C\cdot (A'\cdot X_1+B'\cdot Y_1)$ yields $Y_3$, and this addition can be realized in depth~$1$ with $n$ CNOT gates.
\end{enumerate}
To clean up ancillae, the circuit is run backwards, excluding the final multiplications to compute $Z_3=E\cdot Z_1$, $X_3=B\cdot F$, the multiplication $C\cdot(A'\cdot X_1+B'\cdot Y_1)$, and the final addition to compute $Y_3$. This increases the overall depth by $3D_M(n)$ plus $4n+\bigO(1)$ (the latter accounting for CNOT gates), the $T$-depth by $3D_M^T(n)$, the gate count by an additional $6G_M(n)$ plus $4n^2+\bigO(n)$ (the latter accounting for CNOT gates), and the $T$-gate count by $6G_M^T(n)$.\qed
\end{proof}

\subsubsection{Mixed addition with a formula by Higuchi and Takagi}\label{sec:HigTak}
Building on earlier work by L\'opez and Dahab \cite{LoDa98}, in \cite{HiTa00} Higuchi and Takagi suggest a method to add points on an elliptic curve, which requires fewer multiplications than the \emph{madd-2008-bl} formulae we just discussed. Again, we consider the case of a point addition $P_1+P_2$ with $P_1\ne\pm P_2$ and $P_1\ne{\mathcal O}\ne P_2$, where $P_2$ is fixed. Instead of the usual projective coordinates $(X,Y,Z)$ with $x=X/Z$ and $y=Y/Z$ satisfying Equation~\eqref{equ:weierstrass}, Higuchi and Takagi choose a projective representation with $x=X/Z$ and $y=Y/Z^2$. The corresponding projective formulation of Equation~\eqref{equ:weierstrass} then becomes $$Y^2 + XYZ = X^3Z + a_2X^2Z^2 + a_6Z^4,$$
and the identity element $\mathcal O$ is represented by $(X,0,0)\in{\mathbb F}_{2^n}^3$ with $X\in{\mathbb F}_{2^n}^*$ arbitrary. For adding a curve point $P_1$ represented in these coordinates by $(X_1, Y_1, Z_1)\in{\mathbb F}_{2^n}^3$ to a fixed curve point $P_2$ given by affine coordinates $(x_2,y_2)\in{\mathbb F}_{2^n}^2$, ten ${\mathbb F}_{2^n}$-multiplications along with nine ${\mathbb F}_{2^n}$-additions and three squarings suffice. In two of the ten multiplications one operand is constant: 

$$\addtolength{\arraycolsep}{-1pt}
\begin{array}{l}
\begin{array}{lllllllll}
  A &=& x_2\cdot Z_1,&  B_1 &=& X_1^2,& B_2 &=& A^2,\\
  C &=& X_1+A,& D&=&B_1+B_2,& E &=&y_2\cdot Z_1^2,\\
  F &=& Y_1+E,& G&=&F\cdot C,\\\hline
\end{array}\\
\begin{array}{lll}
   Z_3 &=& Z_1\cdot D,\\
   X_3 &=& X_1\cdot(E+B_2)+A\cdot(Y_1+B_1),\\
   Y_3 &=& (X_1\cdot G+Y_1\cdot D)\cdot D + (G+Z_3)\cdot X_3.
\end{array}
\end{array}$$
Allowing an additional squaring, which does not affect the $T$-gate complexity, the formula for $Y_3$ can be rewritten as

\begin{equation}\label{equ:newy3}
  Y_3=X_1\cdot D\cdot G+Y_1\cdot D^2 + (G+Z_3)\cdot X_3.
\end{equation}
This latter formulation is helpful in deriving a quantum circuit with fewer $T$-gates and a lower depth than the one in Proposition~\ref{prop:naiveadd}:

% Using a ghost-bit basis or Gaussian normal basis representation as discussed in \cite{ARS12}, squaring a field element becomes free as it is just a re-ordering of the wires. We design our curcuit to implement the addition of any two points $P=(X_0,Y_0,Z_0)$ and $Q=(X_1,Y_1,Z_1)$ on $E({\mathbb F}_{2^n})=\left\{(x,y,z): y^2 + xyz = x^3z + ax^2z^2 + bz^4\right\}$, with $\left(x,0,0\right)$ the identity element, yielding $P+Q=(X_2,Y_2,Z_2)$ by way of the explicit addition formula suggested by Higuchi and Takagi in \cite{HiTa00} as follows:
%\[ \begin{array}{lll}
%A_0 = x_0 \cdot z_1
% & D = B_0 + B_1 & G = F \cdot C \\
%A_1 = x_0 \cdot z_0
% & E_0 = y_0 \cdot z_{1}^2 & z_2 = z_0 \cdot z_1 \cdot D \\
%B_0 = A_{0}^2
% & E_1 = y_1 \cdot z_{0}^2 & x_2 = A_0 \cdot \left(E_1 + B_1\right) + A_1 \cdot \left(E_0 + B_0\right) \\
%B_1 = A_{1}^2
% & F = E_0 + E_1 & \left(A_0 \cdot G + E_0 \cdot D\right) \cdot D + \left(G + z_2\right) \cdot x_2 \\
%C = A_0 + A_1 \end{array}\]
\begin{proposition}\label{prop:cleveradd} The point addition 
$$\ket{X_1}\ket{Y_1}\ket{Z_1}\ket{0}\ket{0}\ket{0}\longrightarrow\ket{X_1}\ket{Y_1}\ket{Z_1}\ket{X_3}\ket{Y_3}\ket{Z_3}$$
can be implemented in overall depth $4D_M(n)$ plus $4n+\bigO(1)$ (the latter being CNOT gates), and $T$-depth $4D_M^T(n)$. Further, a total of $13G_M(n)$ gates and $8n^2+\bigO(n)$ CNOT gates suffice. The total number of $T$-gates is $13G_M^T(n)$. This includes the cost for cleaning up ancillae.

Here $(X_3, Y_3, Z_3)$ is some projective representation of $P_1+P_2$ as used by Higuchi and Takagi and $P_2$ a fixed curve point that is represented with affine coordinates $(x_2,y_2)$.
\end{proposition}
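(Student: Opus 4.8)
The plan is to mirror the proof of Proposition~\ref{prop:naiveadd}: first classify the operations appearing in the Higuchi--Takagi formulae into those that are ${\mathbb F}_2$-linear---squarings, multiplications by the fixed constants $x_2,y_2$, additions, and arbitrary compositions thereof---and the genuinely general ${\mathbb F}_{2^n}$-multiplications, since only the latter consume $T$-gates. Working with the rewritten expression~\eqref{equ:newy3} for $Y_3$, the general multiplications are $G=F\cdot C$, $Z_3=Z_1\cdot D$, the two products $X_1\cdot(E+B_2)$ and $A\cdot(Y_1+B_1)$ that make up $X_3$, the product $X_1\cdot D$ together with its subsequent multiplication by $G$ (for the term $X_1\cdot D\cdot G$), the product $Y_1\cdot D^2$, and finally $(G+Z_3)\cdot X_3$: eight general multiplications in all. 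Every operand fed into these, such as $E+B_2$, $Y_1+B_1$, $D$, $D^2$, $A$, $F$, and $C$, is an ${\mathbb F}_2$-linear image of the inputs $X_1,Y_1,Z_1$, hence computable with at most $n^2+n$ CNOTs and no $T$-gate, exactly as recorded just before Proposition~\ref{prop:naiveadd}.

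Next I would schedule these eight multiplications in two parallel layers, introducing `work copies' of $X_1,Y_1,Z_1$ and of $G,Z_3$ via CNOTs so that no two multipliers contend for the same control wire. The operands of $G$, $Z_3$, $X_1\cdot(E+B_2)$, $A\cdot(Y_1+B_1)$, $X_1\cdot D$, and $Y_1\cdot D^2$ are all linear in the inputs, so these six form the first layer, of depth $D_M(n)$ and $T$-depth $D_M^T(n)$. After a constant-depth CNOT step forming $X_3$ and the combination $G+Z_3$, the two remaining multiplications $(X_1\cdot D)\cdot G$ and $(G+Z_3)\cdot X_3$ constitute the second layer. This is precisely where the rewriting~\eqref{equ:newy3} pays off: it caps the multiplicative depth of the $Y_3$ computation at two, so the forward pass needs only multiplication depth $2D_M(n)$ plus a few CNOT layers. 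I would arrange the additions so that $Z_3$, one of the two $X_3$-products, and the product $(G+Z_3)\cdot X_3$ are written \emph{directly} into the three output registers, while the other five products are placed in ancilla registers and merely CNOT-added into the outputs.

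To clean up, I would run the five ancilla-producing multiplications---$G$, $X_1\cdot D$, $(X_1\cdot D)\cdot G$, the non-kept product $X_1\cdot(E+B_2)$, and $Y_1\cdot D^2$---backwards; respecting the dependence of $(X_1\cdot D)\cdot G$ on both $G$ and $X_1\cdot D$, this fits into two further multiplication layers of depth $2D_M(n)$ and $T$-depth $2D_M^T(n)$, together with the backward CNOTs, while the three multiplications writing into the output registers are left untouched. Tallying gives $8+5=13$ general multiplications, hence $13G_M(n)$ gates and $13G_M^T(n)$ $T$-gates; an overall multiplication depth of $4D_M(n)$ and $T$-depth $4D_M^T(n)$; the interleaved depth-$2n$ linear maps and constant-depth CNOT steps account for the additive $4n+\bigO(1)$ in the depth. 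Counting the ${\mathbb F}_2$-linear maps---each at most $n^2+n$ CNOTs---used to build the operands and to assemble and later disassemble the outputs yields the claimed $8n^2+\bigO(n)$ CNOT gates.

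The main obstacle is the bookkeeping that simultaneously hits the target depth and the target gate count. One must verify that the operand dependencies genuinely collapse into only two forward multiplication layers, and that the accumulation pattern can be chosen so that exactly three products are `kept' in the output registers, forcing the uncomputation to reverse exactly five multiplications and giving $8+5=13$ rather than, say, $2\cdot 8=16$. Keeping fewer products would demand more uncomputation, whereas keeping more products directly in the outputs would serialize the writes to a single register and inflate the depth beyond $4D_M(n)$; the stated bounds correspond to the balance point, and confirming that this balance is attainable---in particular that forming $X_3$ and $G+Z_3$ between the two layers introduces no third multiplication layer---is the crux of the argument.
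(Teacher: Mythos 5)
Your proposal matches the paper's proof essentially step for step: the same eight general multiplications arranged in a first parallel layer of six ($G$, $Z_3$, $X_1\cdot(E+B_2)$, $A\cdot(Y_1+B_1)$, $X_1\cdot D$, $Y_1\cdot D^2$) and a second layer of two ($(X_1\cdot D)\cdot G$ and $(G+Z_3)\cdot X_3$), with the same three products ($Z_3$, one $X_3$-summand, and $(G+Z_3)\cdot X_3$) retained so that exactly five multiplications are uncomputed, giving $13G_M(n)$ and depth $4D_M(n)$ plus the CNOT overhead. This is the paper's argument.
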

\begin{proof}
  To implement the point addition formulae by Higuchi and Takagi we can proceed as follows:
\begin{enumerate}
\item Using $3n$ CNOT gates, in depth $2$ we create `work copies' $X_1'$ of $X_1$ as well as $Z_1'$, $Z_1''$ and $Z_1'''$ of $Z_1$.
\item With no more than $4\cdot(n^2+n)$ CNOT gates, use the matrix-vector multiplications to compute $A=x_2\cdot Z_1$, $B_1=X_1^2$, $B_2=(x_2\cdot Z_1')^2$ and $E=y_2\cdot(Z_1'')^2$ which can be performed in parallel in depth $2n$. To be able to compute $D^2$, using $2\cdot(n^2+n)$ CNOT gates, we also compute in parallel $B_1^2=(X_1')^4$ and $B_2^2=(x_2\cdot Z_1''')^4$.

\item Using $\bigO(n)$ CNOT gates and constant depth we can now store $C=X_1+A$, $D=B_1+B_2$, a `work copy' $D'$ of $D$, and $F=Y_1+E$ in separate registers. Moreover, maintaining constant depth and with a linear number of CNOT gates, we can also store $E+B_2$, $Y_1+B_1$, and $D^2=B_1^2+B_2^2$; the latter three values will be used for computing $X_3$ and $Y_3$ respectively.
\item Now, using six general ${\mathbb F}_{2^n}$-multipliers, we can in parallel compute $G=F\cdot C$, $Z_3=Z_1\cdot D$, $X_1\cdot(E+B_2)$, $A\cdot(Y_1+B_1)$, $X_1'\cdot D'$, and $Y_1\cdot D^2$. For this, $6\cdot G_M(n)$ gates and depth $D_M(n)$ suffice.
\item At this point, $\bigO(n)$ CNOT gates and constant depth are adequate to compute $X_3=
X_1\cdot(E+B_2)+A\cdot(Y_1+B_1)$ and $G+Z_3$ and store these values in new registers.
\item With two more multipliers that operate in parallel, $(X_1'\cdot D')\cdot G$ and $(G+Z_3)\cdot X_3$ can be computed. Using $2\cdot G_M(n)$ gates, this can be accomplished in depth $D_M(n)$.
\item Finally, using $\bigO(n)$ CNOT gates and depth~$2$, with Equation~\eqref{equ:newy3} we can compute $Y_3=X_1\cdot D'\cdot G+Y_1\cdot D^2 + (G+Z_3)\cdot X_3$.
\end{enumerate}
To clean ancillae, we run the circuit backwards with the exception of the the final additions to compute $Y_3$ and $X_3$ and the multipliers to compute $Z_3=Z_1\cdot D$, $(G+Z_3)\cdot X_3$ and $A\cdot(Y_1+B_1)$. This increases the overall depth by $2D_M(n)$ plus $2n+\bigO(1)$ (the latter accounting for CNOT gates), the $T$-depth by $2D_M^T(n)$, the gate count by an additional $5G_M(n)$ plus $6n^2+\bigO(n)$ (the latter accounting for CNOT gates), and the $T$-gate count by $5G_M^T(n)$.\qed
\end{proof}
Comparing Proposition~\ref{prop:naiveadd} and Proposition~\ref{prop:cleveradd},  we see that passing from the usual projective representation to the one used by Higuchi and Takagi results in a significant saving in the total number of gates and $T$-gates while reducing the circuit depth and $T$-depth. Thence, replacing the usual projective addition in the quadratic depth solution for the discrete logarithm problem in \cite{MMCP09b} with the addition discussed in this section is an attractive implementation option.

\subsection{Implementing a general point addition using Edwards curves}
In view of the case distinctions in the addition law in Figure~\ref{fig:additionlaw}, implementing a quantum circuit that properly handles all cases of a point addition appears to be a somewhat burdensome task: in addition to the `generic case' $P_1\ne\pm P_2$ (with $P_2$ not being fixed) and $P_1\ne{\mathcal O}\ne P_2$, we have to implement a doubling formula ($P_1=P_2$), making sure that the identity element is handled properly ($P_1=-P_2$, $P_1={\mathcal O}$ or $P_2={\mathcal O}$). It is important to note here that testing the branching conditions in Figure~\ref{fig:additionlaw} comes at a certain cost when working with inversion-free arithmetic as just discussed. With projective coordinates as described in Section~\ref{sec:goodoldprojective}, let $(X_1,Y_1,Z_1)\in{\mathbb F}_{2^n}^3$ and $(X_2,Y_2,Z_2)\in{\mathbb F}_{2^n}^3$ be representations of two curve points $P_1$, $P_2$ different from the identity. Checking if these two points satisfy $$\underbrace{X_1/Z_1}_{x_1}=\underbrace{X_2/Z_2}_{x_2}\ (\iff X_1Z_2=X_2Z_1)$$ requires two ${\mathbb F}_{2^n}$-multiplications---not taking into account additional gates that may be needed to clean up ancillae. 

Working with a different representation of elliptic curves offers an elegant alternative to dealing with the case distinctions in Figure~\ref{fig:additionlaw}:
In \cite{BLF08}, Bernstein et al. discuss a representation of ordinary elliptic curves over ${\mathbb F}_{2^n}$ which affords a \emph{complete} addition law, i.\,e., the addition of any two curve points is handled with the very same formula. For $n\ge 3$ (which is especially safe to assume in cryptographic applications), each ordinary elliptic curve is birationally equivalent to such a \emph{complete binary Edwards curve} \cite{BLF08}.
\begin{definition}[Complete binary Edwards curve]
   Let $d_1, d_2\in{\mathbb F}_{2^n}$ with $\Tr(d_2)=1$. Then the \emph{complete binary Edwards curve with coefficients $d_1$ and $d_2$} is the affine curve defined by
$$d_1(x+y)+d_2(x^2+y^2)=xy+xy(x+y)+x^2y^2.$$
We will write ${\mathrm E}_{{\mathrm B},d_1, d_2}({\mathbb F}_{2^n})$ for the set of (${\mathbb F}_{2^n}$-rational) points on this curve.
\end{definition}
The identity element of a complete binary Edwards curve is $(0,0)\in {\mathrm E}_{{\mathrm B},d_1, d_2}({\mathbb F}_{2^n})$, and for \emph{any} two points $P_1=(x_1,y_1)$ and $P_2=(x_2,y_2)$ in  ${\mathrm E}_{{\mathrm B},d_1, d_2}({\mathbb F}_{2^n})$, their sum is $P_3=(x_3, y_3)$ with
\begin{eqnarray*}
   x_3&=&\frac{d_1(x_1+x_2)+d_2(x_1+y_1)(x_2+y_2)+(x_1+x_1^2)(x_2(y_1+y_2+1)+y_1y_2)}{d_1+(x_1+x_1^2)(x_2+y_2)}\text{ and}\\
   y_3&=&\frac{d_1(y_1+y_2)+d_2(x_1+y_1)(x_2+y_2)+(y_1+y_1^2)(y_2(x_1+x_2+1)+x_1x_2)}{d_1+(y_1+y_1^2)(x_2+y_2)}.
\end{eqnarray*}
Similar to working with a short Weierstrass form, one can pass to projective coordinates to avoid costly inversions. In \cite{BLF08} an explicit addition formula is given to compute a representation $(X_3,Y_3,Z_3)$ of the sum of two points on a complete binary Edwards curve, represented projectively as $(X_1,Y_1,Z_1)$ and $(X_2,Y_2,Z_2)$. The formula involves 21 general multiplications in ${\mathbb F}_{2^n}$, three multiplications by the parameter $d_1$, one multiplication by the parameter $d_2$, $15$ additions of ${\mathbb F}_{2^n}$-elements, and one squaring:
$$\addtolength{\arraycolsep}{-1pt}
\begin{array}{l}
\begin{array}{llllllllllll}
  W_1 &=& X_1 + Y_1,  & W_2 &=& X_2 + Y_2,& A &=& X_1\cdot (X_1 + Z_1),& B& =& Y_1 \cdot (Y_1 + Z_1),\\
  C   &=&Z_1\cdot Z_2,& D  &=& W_2\cdot Z_2,& E&=& d_1C^2,& H &=& (d_1Z_2 + d_2W_2)\cdot W_1 \cdot C,\\
  I &=& d_1Z_1\cdot C,& U &=& E + A\cdot D,& V &=& E + B\cdot D,& S &=& U\cdot V,
\end{array}\\
\begin{array}{lll}
   X_3 &=& S\cdot Y_1 + (H + X_2\cdot(I + A\cdot(Y_2 + Z_2)))\cdot V\cdot Z_1,\\
   Y_3 &=& S\cdot X_1 + (H + Y_2\cdot(I + B\cdot(X_2 + Z_2)))\cdot U\cdot Z_1,\\
   Z_3 &=& S\cdot Z_1.
\end{array}
\end{array}$$
These formulae can be translated into a quantum circuit for adding arbitrary (variable) curve points:% into aWhen using a Gaussian normal basis or a ghost-bit basis representation of ${\mathbb F}_{2^n}$ as discussed in \cite{ARS12}, the squaring operations translate into a simple wire permutation, and for the following theorem we assume that such a representation is used.

\begin{proposition}
Denote by $(X_1, Y_1, Z_1)$ and $(X_2, Y_2, Z_2)$ projective representations of two (not necessarily distinct) points $P_1, P_2\in E_{\mathrm{B}, d_1, d_2}$.
Then the point addition $$\ket{X_1}\ket{Y_1}\ket{Z_1}\ket{X_2}\ket{Y_2}\ket{Z_2}\ket{0}\ket{0}\ket{0}\longrightarrow\ket{X_1}\ket{Y_1}\ket{Z_1}\ket{X_2}\ket{Y_2}\ket{Z_2}\ket{X_3}\ket{Y_3}\ket{Z_3}$$ can be implemented in overall depth $5D_M(n)+4\max(D_M(n),2n)+\bigO(1)$, where the argument $2n$ of $\max(\cdot)$ as well as the $\bigO(1)$ reflect CNOT gates only, and $T$-depth $9D_M^T(n)$. Further, a total of $39G_M(n)$ plus $8n^2+\bigO(n)$ CNOT gates suffice. The total number of $T$-gates is $39G_M^T(n)$. At this, $(X_3,Y_3,Z_3)$ is a projective representation of $P_1+P_2$. This includes the cost for cleaning up ancillae.
\end{proposition}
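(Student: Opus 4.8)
The plan is to mirror the proofs of Proposition~\ref{prop:naiveadd} and Proposition~\ref{prop:cleveradd}: translate the Bernstein--Lange--Farashahi addition formula for ${\mathrm E}_{{\mathrm B},d_1,d_2}$ line by line into a reversible circuit, classify each arithmetic step as either a general ${\mathbb F}_{2^n}$-multiplication (the only source of $T$-gates) or an ${\mathbb F}_2$-linear operation (an addition, a squaring, or a multiplication by the constant $d_1$ or $d_2$, each realizable with CNOTs alone), and then schedule the $21$ general multiplications into as few parallel layers as possible. First I would pin down which products are genuinely general: $A,B,C,D$ (depending only on the inputs), the two products hidden in $H$ and the one in $I$, the products $A\cdot D$ and $B\cdot D$ feeding $U$ and $V$, the product $S=U\cdot V$, the ten products that assemble $X_3$ and $Y_3$ (five each, counting the triple products $P_X\cdot V\cdot Z_1$ and $P_Y\cdot U\cdot Z_1$ as two multiplications apiece), and finally $Z_3=S\cdot Z_1$. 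Everything else---including $E=d_1C^2$ and the linear forms $d_1Z_2+d_2W_2$ and $d_1Z_1$---is ${\mathbb F}_2$-linear and contributes only CNOTs.

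Next I would lay out the data-dependency DAG and insert CNOT ``work copies'' (in constant depth, costing no $T$-gates) wherever a value is consumed by several multipliers in the same layer, exactly as in the earlier propositions. Choosing the parenthesizations carefully---forming $(d_1Z_2+d_2W_2)\cdot W_1$ first since both operands are available immediately, and deferring the final $\cdot Z_1$ in each triple product---one obtains five forward multiplier-layers: $\{A,B,C,D,(d_1Z_2+d_2W_2)\cdot W_1\}$; then $\{H,I,A\cdot D,B\cdot D,A\cdot(Y_2+Z_2),B\cdot(X_2+Z_2)\}$; then $\{S,\ X_2\cdot(I+A(Y_2+Z_2)),\ Y_2\cdot(I+B(X_2+Z_2))\}$; then $\{S\cdot Y_1,S\cdot X_1,Z_3=S\cdot Z_1,P_X\cdot V,P_Y\cdot U\}$; and finally $\{(P_XV)\cdot Z_1,(P_YU)\cdot Z_1\}$ written directly into the $X_3$- and $Y_3$-registers. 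Accumulating $S\cdot Y_1$ and $S\cdot X_1$ into those output registers by CNOT lets the last $\cdot Z_1$ multipliers complete $X_3$ and $Y_3$. This gives forward depth $5D_M(n)$ and $T$-depth $5D_M^T(n)$, with the interleaved ${\mathbb F}_2$-linear steps accounting for the $2n$-depth CNOT blocks.

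To clean the ancillae I would run the circuit backwards while retaining the three output-producing multipliers $Z_3=S\cdot Z_1$, $(P_XV)\cdot Z_1$ and $(P_YU)\cdot Z_1$ (and the CNOTs depositing the partial sums into the outputs), just as in the uncomputation step of Proposition~\ref{prop:cleveradd}. Reversing the remaining $21-3=18$ multiplications and all intermediate linear maps restores every ancilla to $\ket{0}$, so the total multiplier count is $21+18=39$, yielding $39G_M(n)$ gates and $39G_M^T(n)$ $T$-gates. The backward pass traverses four multiplier-layers, each possibly interleaved with the reversal of a $2n$-depth linear map, contributing $4\max(D_M(n),2n)+\bigO(1)$ to the depth and $4D_M^T(n)$ to the $T$-depth; combined with the forward pass this gives overall depth $5D_M(n)+4\max(D_M(n),2n)+\bigO(1)$ and $T$-depth $9D_M^T(n)$. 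The $n^2$-order CNOTs all stem from the squaring and the constant-multiplications by $d_1,d_2$, summing to $8n^2+\bigO(n)$ over the forward and backward passes, while the additions and work copies add only $\bigO(n)$ further CNOTs.

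The main obstacle is not the algebra---correctness of the formula is guaranteed by \cite{BLF08}---but the scheduling and register bookkeeping: verifying that the $21$ multiplications genuinely pack into the five claimed layers with no two multipliers contending for the same operand register, that enough work copies are provisioned (and are themselves cleaned up), and above all that the uncomputation retains exactly the three final multipliers while still being able to reverse the products $S,P_X,P_Y,U,V$ on which they depend. This last point is delicate, since the operands of a retained multiplier must survive until that multiplier fires yet must eventually be uncomputed; getting the order of the backward pass right---so that the partial sums already written into $X_3,Y_3,Z_3$ are preserved while all ancillae return to $\ket{0}$---is the crux, precisely as it was in Proposition~\ref{prop:cleveradd}.
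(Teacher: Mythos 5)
Your proposal follows essentially the same route as the paper: translate the formulae line by line, isolate the $21$ general multiplications as the only source of $T$-gates, pack them into five forward layers, and uncompute by reversing everything except the three output-producing multipliers, giving $21+18=39$ multipliers, $39G_M^T(n)$ $T$-gates, $T$-depth $9D_M^T(n)$, and $8n^2+\bigO(n)$ CNOTs from the four ${\mathbb F}_2$-linear maps ($d_1Z_2$, $d_2W_2$, $d_1Z_1$, $E=d_1C^2$) and their reversal. All of these counts match the paper's.

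There is, however, one concrete scheduling slip: you place $(d_1Z_2+d_2W_2)\cdot W_1$ in the \emph{first} multiplier layer and claim both operands are ``available immediately.'' They are not---$d_1Z_2$ and $d_2W_2$ are constant multiplications, each a depth-$2n$ CNOT circuit, whose outputs feed that multiplier, so they must complete \emph{before} your first general-multiplier layer begins. This serializes a $2n$-depth linear block in front of layer~1 (and, mirrored, behind the last backward layer), so your circuit has depth roughly $7D_M(n)+2\max(D_M(n),2n)+4n+\bigO(1)$, which exceeds the claimed $5D_M(n)+4\max(D_M(n),2n)+\bigO(1)$ by an additive $\Theta(n)$ whenever $D_M(n)\ge 2n$ (the typical case for the Mastrovito multiplier). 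The paper avoids this by parenthesizing $H$ the other way, as $(d_1Z_2+d_2W_2)\cdot(W_1\cdot C)$: the constant multiplications $d_1Z_2$ and $d_2W_2$ run \emph{in parallel with} the first general-multiplier layer (whence the $\max(D_M(n),2n)$ terms), $W_1\cdot C$ is computed in the second layer alongside $A\cdot D$, $B\cdot D$, etc., and $H$ is completed in the third layer together with $I$ and $S$. Everything else in your argument---the dependency analysis, the work-copy bookkeeping, and the uncomputation retaining exactly $Z_3=S\cdot Z_1$ and the two products written into the $X_3$- and $Y_3$-registers---is sound and matches the paper's proof.
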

\begin{proof}
   To implement the above addition formulae, we proceed as follows:
\begin{enumerate}
\item Compute in parallel the values $W_1, W_2$ as well as $X_1+Z_1$ and $Y_1+Z_1$, $Y_2+Z_2$, and $X_2+Z_2$  from the input values $X_1, Y_1, Z_1, X_2, Y_2, Z_2$---this can be done in constant depth using $\bigO(n)$ CNOT gates. In addition we use (depth~$1$) additions to $\ket{0}$ to create `work copies' $W_2'$ of $W_2$, $Z_1'$ of $Z_1$, and $Z_2', Z_2''$ of $Z_2$ using $3n$ CNOT gates.
\item Using four general ${\mathbb F}_{2^n}$-multipliers and two matrix vector multiplications, compute in parallel the values $A$, $B$, $C$, $D=W_2\cdot Z_2'$, along with $d_1Z_2''$ and $d_2W_2'$. As all involved multipliers operate on disjoint sets of wires, this can be done in depth $\max(D_M(n),2n)$ using no more than $4G_M(n)$ plus $2\cdot(n^2+n)$ gates (the latter accounting for CNOT gates).
\item Compute (in preparation for computing $H$) the value $d_1Z_2''+d_2W_2'$ and create `work copies' $A'$ of $A$, $B'$ of $B$, $C'$ of $C$, and $D'$ of $D$ using $\bigO(n)$ CNOT gates and constant depth.% As $C^2$ is obtained from $C$ through a wire permutation, we can also store $C^2$ on a separate set of wires using $\bigO(n)$ CNOT gates while maintaining depth $1$.

\item Using five general ${\mathbb F}_{2^n}$-multipliers and two matrix vector multiplications, compute in parallel the values $E=d_1C^2$, $W_1\cdot C'$, $A\cdot D$, $B\cdot D'$, $A'\cdot(Y_2+Z_2)$, $B'\cdot(X_2+Z_2)$ and $d_1Z_1$. This can be done in depth $\max(D_M(n),2n)$ with no more than $5G_M(n)$ plus $2\cdot(n^2+n)$ gates (the latter accounting for CNOT gates).
\item Compute $U$ and $V$ and create `work copies' $U'$ of $U$ and $V'$ of $V$ in constant depth using $\bigO(n)$ CNOT gates.
\item Using five general ${\mathbb F}_{2^n}$-multipliers, find $H$, $I$, $S$, $U'Z'_1$ and $V'Z_1$ using $5G_M(n)$ gates in depth $D_M(n)$.
\item Compute $I+A\cdot(Y_2+Z_2)$ and $I+B'\cdot(X_2+Z_2)$ in constant depth using $\bigO(n)$ CNOT gates. Moreover, generate a `work copy' $S'$ of $S$ using $n$ CNOT gates and maintaining constant depth.
\item Using four general ${\mathbb F}_{2^n}$-multipliers, compute in parallel $X_2\cdot (I+A\cdot(Y_2+Z_2))$ and $Y_2\cdot(I+B(\cdot X_2+Z_2))$, $SX_1$ and $S'Y_1$, in depth $D_M(n)$ using $4G_M(n)$ gates.
\item Involving $\bigO(n)$ CNOT gates, compute $H+X_2\cdot (I+A\cdot(Y_2+Z_2))$ and $H+Y_2\cdot(I+B\cdot(X_2+Z_2))$ in depth~$2$.
\item Multiply $H+X_2\cdot (I+A\cdot(Y_2+Z_2))$ with $V'Z_1$, $H+Y_2\cdot(I+B\cdot(X_2+Z_2))$ with $U'Z'_1$, and compute $Z_3=S\cdot Z_1$. This can be done using $3G_M(n)$ gates in depth $D_M(n)$.
\item Compute $X_3$ by adding $S'Y_1$ to $(H+X_2\cdot (I+A\cdot(Y_2+Z_2)))\cdot V'Z_1$ and $Y_3$ by adding $SX_1$ to $(H+Y_2\cdot(I+B\cdot (X_2+Z_2)))\cdot U'Z'_1$ in depth~$1$ using $\bigO(n)$ CNOT gates.
\end{enumerate}
The above circuit has depth $3D_M(n)+2\max(D_M(n),2n)+\bigO(1)$ with the argument $2n$ of $\max(\cdot)$ as well as the $\bigO(1)$ originating in CNOT gates. The number of gates is bounded by $21G_M(n)$ plus $4n^2+\bigO(n)$ CNOTs. `Uncomputing' auxiliary qubits by running the circuit backwards---with the exception of the multiplications $Z_3=S\cdot Z_1$, $H+Y_2\cdot(I+B'\cdot (X_2+Z_2))\cdot U'Z'_1$, $H+X_2\cdot(I+A\cdot (Y_2+Z_2))\cdot V'Z_1$, and the final additions to compute $X_3$ and $Y_3$---yields the desired bound.\qed
\end{proof}

Making use of the (linear-depth and polynomial-size) multiplication circuits in \cite{ARS12}, for asymptotic purposes  we obtain the following corollary from the above proposition.

\begin{corollary}
Two points on an Edwards curve in projective representation can be added in linear depth with a polynomial-size quantum circuit.
\end{corollary}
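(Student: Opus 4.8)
The plan is to combine the resource estimates of the preceding proposition with the multiplier bounds established in \cite{ARS12}; no new circuit needs to be constructed. The key observation is that the proposition expresses both the depth and the gate count of the Edwards-curve addition circuit purely in terms of the multiplier parameters $D_M(n)$ and $G_M(n)$ (and their $T$-counterparts). Consequently the corollary will follow by a parameter substitution: one simply plugs in concrete asymptotic values for these quantities and checks that the resulting expressions remain linear in depth and polynomial in size.

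First I would recall from \cite{ARS12} that a multiplier in the chosen field representation can be realized in depth $D_M(n)=\bigO(n)$ using $G_M(n)=\bigO(n^2)$ gates; for instance, a ghost-bit basis yields depth $n+1$ with $(n+1)^2$ Toffolis, and a Gaussian normal basis of fixed type $t$ yields depth $\bigO(n)$ with $\bigO(n^2)$ Toffolis. In particular $D_M(n)$ is linear in $n$ and $G_M(n)$ is polynomial. Next I would substitute these bounds into the depth formula $5D_M(n)+4\max(D_M(n),2n)+\bigO(1)$ of the proposition: since $D_M(n)=\bigO(n)$, we have $\max(D_M(n),2n)=\bigO(n)$, so the whole expression collapses to $\bigO(n)$, the additive $\bigO(1)$ being dominated by this linear term. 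The size bound is handled analogously: the proposition reports $39G_M(n)$ plus $8n^2+\bigO(n)$ CNOT gates, and substituting $G_M(n)=\bigO(n^2)$ yields $39\cdot\bigO(n^2)+8n^2+\bigO(n)=\bigO(n^2)$, which is polynomial. This establishes the corollary.

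The only point requiring care---and the nearest thing to an obstacle---is verifying that every \emph{non}-multiplication contribution in the proposition (the $\max(\cdot,2n)$ term inside the depth, the additive $\bigO(1)$ depth, and the $8n^2+\bigO(n)$ CNOT count) is itself at most linear in depth and polynomial in size, so that it cannot spoil the bounds once the linear-depth multipliers are in place. Since each such contribution arises from constant multiplications, squarings, or CNOT fan-out for `work copies', all of which are realized in linear depth with at most $\bigO(n^2)$ CNOT gates by the discussion in Section~\ref{sec:fieldrep}, this check is immediate and no genuine difficulty arises.
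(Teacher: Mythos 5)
Your proposal is correct and follows essentially the same route as the paper: the paper's proof likewise just observes that the multiplier architectures of \cite{ARS12} have linear depth and polynomial size, so the bounds of the preceding proposition collapse to linear depth and polynomial gate count. Your additional check that the non-multiplier contributions (constant multiplications, squarings, and CNOT fan-out) stay within these bounds is a reasonable elaboration of what the paper leaves implicit.
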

\begin{proof}
  This follows immediately from the multiplier architectures described in \cite{ARS12}, which have linear depth and involve only a polynomial number of gates.\qed
\end{proof}

\section{Conclusion}
The circuits for binary elliptic curve arithmetic we have presented here are most likely not `optimal' yet, but they give ample evidence that incorporating results from the classic elliptic curve literature in quantum circuit design is worthwhile: it is possible to bring down the number of gates and $T$-gates that need to be protected against errors and it is possible to reduce the overall circuit depth and $T$-depth. We hope that our results stimulate follow-up work on the design of efficient quantum circuits for elliptic curve arithmetic---including the case of fields of odd characteristic. For adequately evaluting the cryptanalytic potential of quantum computers, this appears to be a fruitful and important research avenue.

\section*{Acknowledgments}
BA and RS acknowledge support by NSF grant No.~1049296 \emph{(Small-scale Quantum Circuits with Applications in Cryptanalysis)}.
\bibliographystyle{plain}
\bibliography{GF2}
\end{document}